   \newtheorem{theorem}{Theorem}[section]
   \newtheorem{lemma}[theorem]{Lemma}
   \newtheorem{proposition}[theorem]{Proposition}
   \newtheorem{corollary}[theorem]{Corollary}
   \newtheorem{defn}[theorem]{Definition}
   \newenvironment{proof}[1][Proof]{\begin{trivlist}
   \item[\hskip \labelsep {\bfseries #1}]}{\end{trivlist}}
   \newenvironment{definition}[1][Definition]{\begin{trivlist}
   \item[\hskip \labelsep {\bfseries #1}]}{\end{trivlist}}
\newcommand{\beq}{\begin{equation}} 
\newcommand{\eeq}{\end{equation}}
\newcommand{\beqa}{\begin{eqnarray}}
\newcommand{\eeqa}{\end{eqnarray}}
\newcommand{\bthm}{\begin{theorem}}
\newcommand{\ethm}{\end{theorem}}
\newcommand{\bco}{\begin{corollary}}
\newcommand{\eco}{\end{corollary}}
\newcommand{\ble}{\begin{lemma}}
\newcommand{\ele}{\end{lemma}}
\newcommand{\bpr}{\begin{proposition}}
\newcommand{\epr}{\end{proposition}}
\newcommand{\bpf}{\begin{proof}}
\newcommand{\epf}{\end{proof}}
   \numberwithin{equation}{section}
\newcommand{\bI}{{\bf I}}
\newcommand{\bJ}{{\bf J}}
\newcommand{\bbC}{\mathbb{C}}
\newcommand{\bbE}{\mathbb{E}}
\newcommand{\bbN}{\mathbb{N}}
\newcommand{\bbR}{\mathbb{R}}
\newcommand{\bbZ}{\mathbb{Z}}
\newcommand{\cA}{\mathcal{A}}
\newcommand{\cB}{\mathcal{B}}
\newcommand{\cE}{\mathcal{E}}
\newcommand{\cF}{\mathcal{F}}
\newcommand{\cH}{\mathcal{H}}
\newcommand{\cR}{\mathcal{R}}
\newcommand{\fh}{\mathfrak{h}}
 \newcommand{\ol}{\overline}
 \newcommand{\wt}{\widetilde}
 \newcommand{\tn}{\textnormal}
\DeclareMathOperator{\id}{I}
\DeclareMathOperator{\de}{d}
\DeclareMathOperator{\tr}{{Tr}}
\begin{document}

\title{Log-dimensional bounds for the  spectral measure of the disordered Holstein model}
\author{Rajinder Mavi}
\maketitle 

\section{Introduction}
We will discuss the disordered Holstein model, defined by the Hamiltonian
  \begin{align}\label{holstein1}
    H & = (V + \gamma J) \otimes \id  +\alpha \sum_{u} \delta_u \otimes ( b_u + b^\dagger_u ) + \omega \id \otimes (\sum_{u} b_u^\dagger b_u ) + \frac{\alpha^2}{\omega}  
   \end{align}
   where $V + \gamma J$ is the standard discrete Anderson model and the operators $b_u,b_u^\dagger$ for $u \in \bbZ^D$ are standard annihilation and creation operators for a field of non-coupled quantum harmonic oscillators. The model and corresponding Hilbert space is discussed in further detail below in Section \ref{model}. 
   
   Briefly, the model describes a mobile tracer particle in a disordered environment. The tracer particle locally deforms the field of oscillators via the coupling term. This deformation allows the tracer particle to create and destroy excitations at it's current location. Thus the model does not conserve excitation (particle) number, which contributes to the difficulty in managing the entropy of the model. 
     
Recently Mavi and Schenker \cite{MS18L} demonstrated dynamical localization properties of the tracer particle for finite energies in the spectrum. The proof proceeds from a modification of the standard fractional moment method \cite{AW2015R}. The derived bounds, stated in this paper in Theorem \ref{greenfdecay}, decay with respect to the position of the tracer particle and the excitation field. However, this model defies traditional proofs of dynamical localization with the same bounds as randomness is only present at the tracer particle.  As the projection to a single tracer particle position is infinite dimensional, the tracer particle localization  does not imply a pure point spectrum.  Indeed it is possible that the disordered Holstein model exhibits resonant delocalization as sending $\gamma \to 0$ leads to degenerate eigenvalues. The degenerate eigenvalues correspond to reconfigurations of the excitation field. The existence of such resonant delocalization was established by Mavi and Schenker \cite{MS19R} in a simple model which would correspond to a single excitation permitted to occupy one of just two positions. 

In this paper we show log-dimensional properties of the spectrum of the Holstein model at finite energies for sufficiently small $\gamma > 0$. The existence of log-dimensional measures was previously discussed for one dimensional quasiperiodic Schrodinger operators in \cite{DL03L}. It should be noted that our results do not rule out pure point spectrum, but do rule out spectra `more continuous' than log-dimensional.

 \subsection{Model}\label{model}
We study the disordered Holstein model on a lattice $\Lambda \subset \bbZ^d$.  The standard model for a particle in a disordered system is the Anderson model, which we will write as
\[ h_{And} = \gamma J + V  \]
acting on $\fh_\Lambda = \ell^2(\Lambda)$. Here $V$ is a diagonal operator with random disorder, for $\phi \in \fh_\Lambda $  
\[ V\phi(u) = V_u\phi(u)  \]
where each term $V_u$ is iid on $[0,V_+ ]$ with bounded distribution $\nu$. The first term $J$ is the Laplacian operator defined, for $u,v\in\Lambda$ by
\[ \langle v | J |u\rangle  = 2d \delta_{u-v} - \delta_{|u-v| -1 }.  \] 
The Holstein model diverges from the Anderson model by affixing quantum harmonic oscillators to the lattice. For each site $u\in \Lambda$ we will write $b_u$ and $b_v^\dagger$ for the standard Bosonic annihilation and creation operators at $u$ and $v$ respectively, obeying commutation relations
\[  [b_u,b_v^\dagger]  = \delta_{u,v} 
  \tn{ and } [b_u^\dagger,b_v^\dagger] = [b_u,b_v] = 0.   \] 
 We will write $|\Omega\rangle $ for the unique vacuum state with no excitations.
 Let us define the spaces of functions
\[ \cA_\Lambda^N = \left\{ \xi : \Lambda \to \bbN:  N_\xi < N  \right\} . \]
where  $N_\xi := \|\xi\|_1 $. For $\xi \in \cA_\Lambda^N$, define
\[  |\xi \rangle := \prod_{ u \in \Lambda } \frac{(b_u^\dagger)^{\xi_u}}{\sqrt{\xi_u!}}|\Omega\rangle   \]
  the functions defining configurations of oscillators. When $N = \infty$ this permits configurations with any finite number of oscilations and forms the basis elements for the Fock space 
\[   \cF_\Lambda = \left\{  \sum_{\xi\in \cA_\Lambda^\infty} f_\xi|\xi\rangle   : \sum_{\xi \in \cA_\Lambda^\infty}(1+ N_\xi )|f_\xi |^2 < \infty \right\}. \]
Note $ \cF_\Lambda$ defines the Hilbert space for the   occupancy counting operator 
\[ h_{osc} =  \sum_{u\in\Lambda} b_u^\dagger b_u.  \]
The Hilbert space upon which the Hamiltonian \eqref{holstein1}  is defined may now be written as $\cH = \ell^2(\bbZ^d) \otimes \cF_{\bbZ^d}$. The second term in \eqref{holstein1}  forms the interaction between the mobile tracer particle and the field of oscillators, the final term is merely  a shift which moves the bottom of the spectrum to 0. Setting $J\otimes \id = \Delta$ we may recast the Hamiltonian as 
  \begin{align}\label{holstein2}
    H  
     & = \gamma \Delta + H_f .
   \end{align}
 The convenience of introducing this field operator $H_f$ is that we can easily write down the associated eigenfunctions with the aid of the Glauber displacement operator, discussed in detail in \cite{MS18L}. We label the set of eigenfunctions as $\cE$ which, for $u\in \bbZ^d$ and $\xi \in \cA_{\bbZ^d}^\infty$, are defined as 
  \begin{align} \label{eigenbasis} |u,\xi \rangle = |u\rangle \otimes e^{\beta(b_u^\dagger - b_u) } | \xi\rangle  \end{align} 
  where $\beta = \alpha / \omega$. Note the displacement operator only acts on the field at the site $u$, the location of the tracer particle.
  For any  $|u,\xi\rangle \in \cE $  we have 
  \[  H_f | u,\xi \rangle = (V_u + \omega N_\xi)|u,\xi\rangle. \]  
   The spectrum of $H_f$ is exactly
  \[ \sigma(H_f) = \overline { \cup_{n} \bI_{n}} \tn{ where, for each $n\in \bbN$, } \bI_{n} 
      =(n \omega,n\omega+V_+) , \]
  when $\omega > V_+$, the spectrum is divided into bands of width $ V_+$.   The $n^{th}$ spectral band, $\bI_{n}$, corresponds to the single tracer particle with $n$ excitations in the oscillator lattice. 
  For the full Hamiltonian $H$, the separation of spectral bands is preserved for small $\gamma$
    \begin{align} \label{spec H} 
    \sigma(H) =  \overline{\cup_{n} \bI_{n,\gamma}} \tn{ where, for each $n\in \bbN$, } \bI_{n,\gamma} 
      =(n \omega,n\omega+V_+ + 4d\gamma).  \end{align}
  Note that the field Hamiltonian $H_f$ does not commute with the kinetic term $\Delta $, thus excitation number is not preserved under dynamical evolution.

 The following result, which states dynamical localization with respect to the localization of the tracer particle, holds for elements of $\cE$.
 \begin{theorem}[Theorem 1.6 from \cite{MS18L}]
 \label{thm MS1} Let $\omega > V_\infty$ and $I  = (-\infty , \omega (N+1))$ for $N\in \bbN $.
 
    Fix $\lambda > 0$, then there is $\wt\gamma>0$ so that, for $0 < \gamma < \wt \gamma$,
    \[  \bbE \left( \sup_{f \in \cB_1(I)} |\langle v,\zeta | f(H) |u,\xi \rangle| \right) \leq  \frac{C_N}{1 + \|\zeta\|^{\frac12 - \epsilon}} e^{- \lambda \| u - v\|} \]
   for any $\epsilon > 0 $ and  $ \langle v,\zeta|,\langle u,\xi | \in \cE$.
 \end{theorem}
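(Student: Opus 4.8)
The plan is to prove a fractional moment bound on the Green's function $(H-z)^{-1}$ that decays in both the tracer separation $\|u-v\|$ and the size of the field configuration $\zeta$, and then to convert it into the dynamical estimate via an eigenfunction-correlator bound. Two features rule out a direct appeal to the textbook fractional moment method \cite{AW2015R}: the ``single site'' fibers $\ran P_u$, where $P_u := |u\rangle\langle u|\otimes\id$, are infinite-dimensional, so the scalar disorder $V_u$ multiplies the projection $P_u$ rather than a rank-one operator; and $\gamma\Delta$ commutes with neither $h_{osc}$ nor $H_f$, so decay in $\zeta$ is not inherited from the spatial decay and must be produced separately.

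\emph{Step 1 (a priori bound).} I would first record that $H_f = H - \gamma\Delta$ is block diagonal in the tracer position and that the only randomness in the $u$-block is the scalar $V_u$, entering as $V_u P_u$. With $Q_u := \id - P_u$, a Feshbach--Schur reduction gives, on $\ran P_u$,
\[ P_u (H-z)^{-1} P_u = -\bigl[T_u(z) - V_u\bigr]^{-1}, \qquad T_u(z) := z - 2d\gamma - \omega M_u + P_u H Q_u (Q_u H Q_u - z)^{-1} Q_u H P_u, \]
where $M_u$ is the displaced occupancy operator on $\ran P_u$; $T_u(z)$ does not depend on $V_u$, and for $\im z>0$ its numerical range lies in $\{\im w \ge \im z\}$. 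Since $V_u$ has a density bounded on a bounded interval, the spectral-averaging (Aizenman--Molchanov) bound then gives, for every $s\in(0,1)$,
\[ \bbE_{V_u}\bigl(|\langle u,\xi|(H-z)^{-1}|u,\xi'\rangle|^s\bigr) \le C_s, \]
uniformly in $z$, in $\xi,\xi'\in\cA_{\bbZ^d}^\infty$, and in the remaining disorder; this replaces the usual rank-one a priori estimate.

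\emph{Step 2 (spatial and field decay).} By the band structure \eqref{spec H} the relevant energies lie in the compact set $\bigcup_{n\le N}\bI_{n,\gamma}$, separated from $\omega(N+1)$ by a gap. For $\re z$ in a neighbourhood of this set I would iterate the resolvent identity along nearest-neighbour chains of tracer positions; each hop contributes a factor $\gamma$ and a diagonal block controlled by Step 1, and because the bands are separated the iteration converges for $\gamma$ small, yielding $\bbE(|\langle v,\zeta|(H-z)^{-1}|u,\xi\rangle|^s)\le C_1 e^{-\lambda\|u-v\|}$ for any prescribed $\lambda$ once $\gamma<\wt\gamma(\lambda)$, uniformly in $\im z$. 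For the decay in $\zeta$, note that the spectral subspace of $H_f$ with occupancy $\ge N+1$ sits at energy $\ge\omega(N+1)$, hence is separated from $\re z$; a Combes--Thomas argument in the occupancy variable, with $\gamma\Delta$ as the off-diagonal perturbation, produces decay in $N_\zeta=\|\zeta\|_1$. The Glauber displacement operators in \eqref{eigenbasis} generate Poisson-type (in $\beta^2$) occupancy changes at each hop of the tracer, and balancing these against the entropy of configurations of a given occupancy near $v$ — together with an interpolation between the pointwise decay and an a priori weighted Ward-identity bound, which the weight $1+N_\xi$ appearing in the definition of $\cF_\Lambda$ makes available — degrades the bare exponential rate to the stated $(1+\|\zeta\|^{1/2-\epsilon})^{-1}$, with $C_N$ absorbing the $N$-dependent constants.

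\emph{Step 3 (from Green's functions to dynamics).} Finally I would pass to $f\in\cB_1(I)$. Because the perturbation is $P_u$-valued rather than rank-one, the standard eigenfunction-correlator bound does not apply verbatim; instead one uses the spectral-averaging identity for projection-valued perturbations from \cite{MS18L} to bound $\bbE\bigl(\sup_{f\in\cB_1(I)}|\langle v,\zeta|f(H)|u,\xi\rangle|\bigr)$ by a constant times $\limsup_{\eta\downarrow 0}\int_I \bbE\bigl(|\langle v,\zeta|(H-(E+i\eta))^{-1}|u,\xi\rangle|^s\bigr)\,dE$, and inserts the estimates of Step 2. The hard part will be the field decay of Step 2: reconciling the non-conservation of occupancy — which forces a genuine Fock-space Combes--Thomas estimate rather than one-step perturbation theory — with the displacement-operator combinatorics, and in particular seeing why only the polynomial rate $\|\zeta\|^{1/2-\epsilon}$, and not an exponential one, survives the sum over the configurations that can be reached.
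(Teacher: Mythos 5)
You should note first that this statement is not proved in the paper at all: it is Theorem 1.6 of \cite{MS18L}, quoted verbatim as an imported input (just as Theorem \ref{greenfdecay} is Corollary 1.5 of the same reference), so there is no in-paper proof to compare your argument against. What can be checked is whether your reconstruction is consistent with what the paper reports about that reference, namely that the proof is a modification of the standard fractional moment method of \cite{AW2015R} adapted to the fact that the randomness $V_u$ couples to the infinite-dimensional fiber $P_u=|u\rangle\langle u|\otimes\id$ rather than to a rank-one projection. Your Steps 1 and 2 (spectral averaging for a projection-valued perturbation via a Feshbach--Schur complement that is independent of $V_u$, then a small-$\gamma$ resolvent expansion exploiting the band separation in \eqref{spec H}, followed by passage to an eigenfunction-correlator bound) are in line with that description and with the form of the quoted Green's function estimate.

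The genuine gap is the one you yourself flag at the end: the decay in the field configuration, i.e.\ the factor $(1+\|\zeta\|^{1/2-\epsilon})^{-1}$, is never actually derived. Your proposed mechanism --- a Combes--Thomas estimate in the occupancy variable giving exponential decay in $N_\zeta$, later ``degraded'' to a polynomial rate by displacement-operator combinatorics --- is left as a declared difficulty rather than an argument, and it does not obviously match the structure of the bound the paper actually quotes from \cite{MS18L}: Theorem \ref{greenfdecay} decays in $\bigl|\sqrt{N_\xi}-\sqrt{N_\zeta}\bigr|$ (a difference of square roots, reflecting Gaussian-type overlaps of displaced oscillator states) and in the pseudo-metric $\Upsilon(u,\xi;v,\zeta)$, which records \emph{where} the two configurations differ relative to the particle positions, not merely the total occupancy; a Combes--Thomas argument keyed to total excitation number would give decay only in $N_\zeta-N_\xi$ and would say nothing about configurations of equal occupancy supported far from the particles. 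Since the $\|\zeta\|$-decay is precisely the nonstandard content of the theorem (the spatial decay alone is the routine part), the proposal cannot be credited as a proof; it is a plausible outline of the strategy of \cite{MS18L} with its central analytic step missing.
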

In \cite{MS18L} it is important that   $\gamma $ is taken small enough that the spectral bands $\ol\bI_{n,\gamma}$ are kept separated.

  \subsection{Main theorem}

   First we introduce the concept of the Hausdorff measure of sets.
  \begin{defn}\label{def gauge}\  
  
  We denote as a Hausdorff function those functions which are continuous and monotonically increasing mapping $[0,\infty)$ to $[0,\infty]$ such that $h(0) = 0$.
  
  Given a set $S\subset \bbR$ and $\delta> 0$, a  $\delta$-cover of $S$ is a countable collection of intervals $U_i$ such that $S \subset \cup_i U_i$ and  $|U_i| < \delta$ for all $i$.
  
 Given a set $S \subset \bbR $, the  $\rho$-Hausdorff measure of $S$ is defined as
\[  \mu^{\rho}(S) = \lim_{\delta \to 0} 
      \inf_{\delta\tn{-covers of }S}  \sum_i \rho( |U_i| ) .  \] 
  \end{defn}
A spectral measure is called $\rho$-Hausdorff singular if it is supported on a set of zero $\mu^\rho$ measure.
 For any $N \geq 0$ let us define  
 \begin{align} \label{N bands open}
 \bJ_{N} = \cup_{n=0}^N \bI_{n,\gamma}, \end{align}
 along with the corresponding spectral projection $\chi_{\bJ_N}$.  
 \bthm\label{main thm}
   For any $N \geq 0$, there is sufficiently small $\gamma$ such that, for any $|u,\xi\rangle \in \cE $ and $\psi = \chi_{\bJ_N} |u,\xi\rangle $, the spectral measure $\mu_{ \psi } $ of  $H$ is purely $\rho$-Hausdorff singular for  
   \[ \rho(s) = |\log(s)|^{-p} \]
   for any $p>0$.
 \ethm

 The remainder of this paper is organized as follows. In Section \ref{Hcqd}, we discuss the necessary spectral measure theory in two parts. In Section \ref{DBm} we further discuss the Hausdorff continuity properties of Borel measures. Next, in Section \ref{QdHcm} we relate spectral continuity to the quantum dynamics of $H$. In the final section we study upper and lower bounds of quantum dynamics of $H$. In Section  \ref{Lbdwp} we use the continuity of Hausdorff measures to imply the spread of the wave packet, in Section \ref{Ubdwp} we utilize upper bounds on quantum dynamics of $H$ to localize the wave packet.  Following this, we conclude the paper with the proof of Theorem \ref{main thm}.

 \section{Hausdorff continuity of measures and quantum dynamics} \label{Hcqd}
 We begin this  section with decompositions of  spectral measures. The  general decomposition of Borel measures is carried out with respect to continuity with respect to gauge functions introduced in Definition \ref{def gauge}. The continuity of the spectral measure  will bound quantum dynamics as discussed in Section \ref{QdHcm}.
   \subsection{Decompositions of Borel measures}\label{DBm}
  The decomposition of measures are determined by their local concentration properties as measured by a modulus of continuity. 
 \begin{defn}
 For a given Borel measure $\mu $ and gauge function $\rho$, we define the upper $\rho$-derivative as 
\[  \ol D_{\rho} \mu(x) = \lim_{\delta \to 0} \sup_{\substack{I \ni x \\ |I|< \delta} } \frac{\mu(I)}{\rho(I)}   \]
 \end{defn}
  Let $K_0,K_+,K_\infty$ be the sets of points $x$ of $\bbR$ such that $\ol D_{\rho} \mu(x)$ takes the value of 0, takes a finite positive value, or takes the value $+\infty$ respectively. Our utility for introducing $\ol D_\rho$ is in the following theorem.

\bthm[Theorem 67 of \cite{R98H}] \label{thm67} Given Borel measure $\mu$, let sets $K_0, K_+, K_\infty$ be defined as above. Then $K_0, K_+, K_\infty$ are Borel sets and the following holds
\begin{enumerate} 
  \item $\mu^\rho (K_\infty) = 0 $
  \item $K_+$ is $ \mu^\rho$ $\sigma $ - finite.
  \item $ \mu(E \cap K_+) = 0 $ if $\mu^\rho(E) = 0$.
  \item $ \mu(E \cap K_0) = 0 $ if $E $ is  $\mu^\rho $ $\sigma$ - finite.
\end{enumerate}

\ethm

By definition, $K_0 \sqcup K_+ \sqcup K_\infty = \bbR$, so we can write
$  d\mu = \chi_{K_0}d\mu +  \chi_{K_+}d\mu + \chi_{K_\infty} d\mu$, which defines a decomposition of measure.
\begin{defn}
  Suppose $\mu$ is a Borel measure and $\rho$ is a Hausdorff function.
\begin{itemize}
 \item[1.] $\mu$ is strongly $\rho$ continuous (S$\rho$C) if for any set $E$ of finite $\mu^\rho$ measure, $\mu (E) = 0$.
 \item[2.] $\mu$ is $\rho$ absolutely continuous ($\rho$AC) if there exists a function $f$ such that $d\mu(x) = f(x)d\mu^{\rho}(x) $.
 \item[3.] $\mu$ is $\rho$ singular ($\rho$S) if it is supported on a set $S$ such that $\mu^\rho(S) = 0$. 
 \end{itemize}
\end{defn}
The following corollary is a consequence of Theorem \ref{thm67} and the above definition.
\bco
 Given a Borel measure $\mu$ and a Hausdorff function $\rho$ there is a unique decomposition
\beq\label{meas decomp}
\mu = \mu_{S \rho C} + \mu_{\rho AC} + \mu_{\rho S}
\eeq
where $\mu_{S \rho C} $ is $S \rho C$, $\mu_{\rho AC}$ is $\rho AC$, and $\mu_{\rho S}$ is $\rho S$.
\eco
We say a Borel measure $\mu$ is $\rho$ continuous if $\mu_{\rho S} \equiv 0$. A stronger notion of $\rho$ continuity is useful in the following section.
 
\begin{definition}
A Borel measure $\mu$ is uniformly $\rho$ Holder (U$\rho$H) if there is a constant $C_\mu$ such that, for all $\epsilon < 1$,
\[   \sup_{s} \mu_{\psi}(s - \epsilon/2,s + \epsilon/2 ) \leq C_\mu \rho( \epsilon )  \]
  
\end{definition}
The following corollary is a further consequence of Theorem \ref{thm67}. 
\bco \label{Co UrhoH decomp}
Suppose any $\mu$ is a Borel measure which is $\rho$-continuous for some Hausdorff function $\rho$. Then for any $\epsilon > 0$, there is a U$\rho$H measure $\mu_1$ and a $\rho $-continuous measure  $\mu_2$  so that, $\mu = \mu_1 + \mu_2$ and $\mu_2(\bbR) < \epsilon $  
\eco

  \subsection{Quantum dynamics for Hausdorff continuous measures}\label{QdHcm}
  In this section we consider evolution of an initial state $\psi \in \cH$, a seperable Hilbert space, with respect to  a general Hamiltonian $H$.   For given self adjoint operator $A$ and vector $\psi \in \cH$, we will use the following notation
  \beq \label{A average}
    \langle A \rangle_{\psi,T} 
    := \int_0^T|\langle\psi(t)|A|\psi(t)\rangle\rangle | \frac{\de t}{T}
  \eeq
    for $\psi(t) = e^{-itH}$. Schrichartz \cite{S90F} and Last \cite{L96Q} have demonstrated a connection between quantum dynamics and spectral dimension for power-law $\rho$ which may be generalized to the following. (See also a similar result in \cite{lp})

\begin{proposition}  
Assume $\psi \in \cH $ is such that $\mu_\psi$ is U$\rho$H. Then there is a constant $C_\psi$ such that 
\[ \langle A \rangle_{\psi,T }
          \leq  C_\psi  \left[\rho\left(T^{-1} \right)\right]^{1/2}  \]
    for any rank one projection operator $A $.
\end{proposition}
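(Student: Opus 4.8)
The strategy is to recognise $t\mapsto\langle\psi(t)|A|\psi(t)\rangle$ as $|\hat\nu(t)|^2$ for a suitable complex spectral measure $\nu$ and then to run a Strichartz--Last type estimate; the one new point, responsible for the exponent $\tfrac12$ in $[\rho(T^{-1})]^{1/2}$, is a Cauchy--Schwarz step converting the U$\rho$H bound on $\mu_\psi$ into a ``U$\sqrt\rho$H'' bound on $\nu$. It suffices to treat large $T$, since $[\rho(T^{-1})]^{1/2}$ is bounded below on any bounded interval of $T\geq1$ and the trivial bound $\langle A\rangle_{\psi,T}\leq\|\psi\|^2$ then covers that range after enlarging $C_\psi$. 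Any rank one projection is $A=|\phi\rangle\langle\phi|$ with $\|\phi\|=1$, so $\langle\psi(t)|A|\psi(t)\rangle=|\langle\phi|e^{-itH}\psi\rangle|^2\geq0$ and the absolute value in \eqref{A average} is redundant. With $E$ the spectral resolution of $H$ and $\nu:=\langle\phi|E(\cdot)|\psi\rangle$, the spectral theorem gives $\langle\phi|e^{-itH}\psi\rangle=\hat\nu(t)$, $\hat\nu(t):=\int e^{-it\lambda}\,d\nu(\lambda)$, so $\langle A\rangle_{\psi,T}=T^{-1}\int_0^T|\hat\nu(t)|^2\,dt$. For a Borel set $S$, partitioning $S$ and applying $|\nu(S')|=|\langle E(S')\phi,E(S')\psi\rangle|\leq\mu_\phi(S')^{1/2}\mu_\psi(S')^{1/2}$ together with Cauchy--Schwarz for sums gives $|\nu|(S)\leq\mu_\phi(S)^{1/2}\mu_\psi(S)^{1/2}$; in particular $\|\nu\|=|\nu|(\bbR)\leq\|\psi\|$ and, $\mu_\psi$ being U$\rho$H with constant $C_0$,
\[ |\nu|\big(x-\tfrac{\epsilon}{2},\,x+\tfrac{\epsilon}{2}\big)\ \leq\ \big(C_0\,\rho(\epsilon)\big)^{1/2}\qquad\text{for all }\epsilon<1,\ x\in\bbR. \]

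Next I apply an Abel (exponential) time average. Since $e^{-2t/T}\geq e^{-2}$ on $[0,T]$, taking $\epsilon:=T^{-1}$ yields $\langle A\rangle_{\psi,T}\leq e^{2}\,\epsilon\int_0^\infty e^{-2\epsilon t}|\hat\nu(t)|^2\,dt$. Expanding $|\hat\nu(t)|^2=\iint e^{-it(x-y)}\,d\nu(x)\,d\bar\nu(y)$, carrying out the (absolutely convergent) $t$-integral, and using that the left-hand side is real and nonnegative,
\[ \epsilon\int_0^\infty e^{-2\epsilon t}|\hat\nu(t)|^2\,dt\ =\ \Big|\iint\frac{\epsilon}{2\epsilon+i(x-y)}\,d\nu(x)\,d\bar\nu(y)\Big|\ \leq\ \iint\frac{\epsilon\,d|\nu|(x)\,d|\nu|(y)}{\sqrt{4\epsilon^2+(x-y)^2}}. \]

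To estimate the last double integral, fix $x$ and split the $y$-integral over $\{|y-x|<\epsilon\}$ and the shells $\{2^k\epsilon\leq|y-x|<2^{k+1}\epsilon\}$, $k\geq0$: on each region the kernel is $O(2^{-k})$, while the $|\nu|$-mass of that region is at most $(C_0\rho(2^{k+2}\epsilon))^{1/2}$ while $2^{k+2}\epsilon<1$ and at most $\|\psi\|$ otherwise. The regions with $2^{k+2}\epsilon\geq1$ contribute only $O(\epsilon\|\psi\|)$ to the $y$-integral, and on the rest a mild regularity of $\rho$ --- a doubling estimate $\rho(2s)\leq D\rho(s)$ for small $s$ with $D<4$, which holds with $D$ arbitrarily close to $1$ for $\rho(s)=|\log s|^{-p}$ --- gives $\rho(2^{k+2}\epsilon)^{1/2}\leq D^{(k+2)/2}\rho(\epsilon)^{1/2}$, so that $\sum_k2^{-k}\rho(2^{k+2}\epsilon)^{1/2}\leq C\,\rho(\epsilon)^{1/2}$ since $D^{1/2}<2$. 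Integrating the resulting pointwise bound against $d|\nu|(x)$ and using $\|\nu\|\leq\|\psi\|$ gives $\epsilon\int_0^\infty e^{-2\epsilon t}|\hat\nu(t)|^2\,dt\leq C\,\|\psi\|\,C_0^{1/2}\,\rho(\epsilon)^{1/2}$, the error $O(\epsilon\|\psi\|^2)$ being absorbed because $\epsilon\ll\rho(\epsilon)^{1/2}$. Together with the Abel bound this is $\langle A\rangle_{\psi,T}\leq C_\psi[\rho(T^{-1})]^{1/2}$ with $C_\psi$ depending only on $\|\psi\|$ and the U$\rho$H constant of $\mu_\psi$, uniformly in $A$ and $T$.

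I expect the dyadic estimate to be the only genuinely technical step: one needs to know that $\rho$ does not decay too slowly (the doubling condition $D<4$, equivalently $\sqrt\rho$-doubling with constant $<2$), or the dyadic series collapses only to $O(1)$ rather than to $O(\rho(\epsilon)^{1/2})$; this is harmless for power-law gauges of exponent $<2$ and for the logarithmic gauges $|\log s|^{-p}$ relevant here. Conceptually, though, the crux is the Cauchy--Schwarz bound $|\nu|(S)\leq\mu_\psi(S)^{1/2}$: it is exactly why a general rank one $A$ only feels $\rho^{1/2}$, the full power $\rho$ being attained when $A$ projects onto $\psi$.
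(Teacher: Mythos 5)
Your proof is correct for the gauges that matter in this paper, but it follows a genuinely different route from the author's. The paper works entirely with the scalar measure $\mu_\psi$: writing $\langle\phi,e^{-itH}\psi\rangle=\int e^{-ixt}f_\phi(x)\,d\mu_\psi(x)$ via the cyclic-subspace representation, it inserts a Gaussian weight $e^{-t^2/T^2}\geq e^{-1}$ into the time average, so the $t$-integral produces the kernel $e^{-T^2(x-y)^2/4}$; the exponent $\tfrac12$ then comes from Cauchy--Schwarz applied to the double $\mu_\psi\times\mu_\psi$ integral, and the inner integral $\int e^{-T^2(x-y)^2/2}d\mu_\psi(y)\leq C\rho(T^{-1})$ is obtained by partitioning $\bbR$ into intervals of length $T^{-1}$ -- the superexponential decay of the Gaussian makes that sum converge for \emph{any} monotone gauge, with no regularity assumption on $\rho$. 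You instead transfer the H\"older bound to the complex measure $\nu=\langle\phi|E(\cdot)|\psi\rangle$ via $|\nu|(S)\leq\mu_\phi(S)^{1/2}\mu_\psi(S)^{1/2}$ (this is where your $\tfrac12$ appears), and use Abel averaging, whose kernel $\epsilon/\sqrt{4\epsilon^2+(x-y)^2}$ decays only like the inverse distance; that is why your dyadic summation needs the doubling condition $\rho(2s)\leq D\rho(s)$ with $D<4$, which you correctly flag. The trade-off: your argument isolates nicely why a general rank-one $A$ only sees $\rho^{1/2}$, but as written it proves a slightly weaker statement than the proposition (which is stated for an arbitrary Hausdorff function), whereas the paper's Gaussian-kernel argument needs no doubling hypothesis. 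Since the proposition is only ever applied with $\rho(s)=|\log s|^{-p}$, for which $D$ can be taken arbitrarily close to $1$ at small scales (and the scales near $1$, where this gauge blows up, are handled by the trivial mass bound exactly as you treat the scales $\geq 1$), your proof suffices for everything downstream; but if you want the proposition in the generality stated, replace the Abel kernel by the Gaussian one, or add the doubling assumption explicitly.
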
 
 \begin{proof}
   Let  $\cH_{|\psi} $ be $\cH$ restricted to the $H$ cyclic subspace generated by $\psi$. There is a unitary operator $U$ from $ \cH_{|\psi}$ to $L^2(\bbR, d\mu_\psi)$ such that $H$ is equivalent to multiplication by $x$. Let $P = |\phi\rangle \langle \phi |$, then there is $f_\phi$ so that $U : \phi \mapsto f_\phi $, thus
   \[  \langle \phi , e^{-iH t} \psi \rangle 
        =  \int_\bbR e^{-ixt} f_\phi(x) \de \mu_\psi(x).   \]
  We can insert this into the calculation for \eqref{A average} , to find
  \begin{align}
     \langle A \rangle_{\psi,T} 
     &  =  \int_0^T \left|\int_\bbR e^{-ixt} f_\phi(x) \de \mu_\psi(x) \right|^2 \frac{\de t}{T}
     \\ \nonumber 
     & =   \int_0^T  
     \int_\bbR \int_\bbR 
     e^{-i(x-y)t} f_\phi(x) \ol{f_\phi(y)} 
   \de \mu_\psi(y)\de \mu_{\psi}(x) \frac{\de t}{T}
     \\ \nonumber 
     & \leq  
     e \int_\bbR \int_\bbR f_\phi(x)\ol{f_\phi(y)} 
     \int_0^T  e^{-t^2/T^2 -i(x-y)t} \frac{\de t}{T}
     \de \mu_\psi(y) \de \mu_{\psi}(x)
     \\ \nonumber 
     & \leq  e \sqrt{\pi} 
     \int_\bbR \int_\bbR 
     e^{-T^2(x-y)^2/4}
      |f_\phi(x)|  |f_\phi(y)| 
     \de \mu_\psi(y)  \de \mu_{\psi}(x) 
  \end{align}
  Using Cauchy-Schwarz inequality twice, we have 
  \begin{align} \label{CS1}
   \langle A \rangle_{\psi,T} 
   & \leq e\sqrt{\pi} 
   \left( \int |f_\phi(y)|^2 \de \mu_\psi(y)\right)
    \left(\int   
  \int e^{-T^2(x-y)^2/2} 
  \de\mu_\psi(y)
  \de\mu_\psi(x) \right)^{1/2}
   \end{align}      
   For $T > 1$ we can partition $\bbR = \cup_{k\in\bbZ} 
    [x+\frac{k}{T} -\frac{1}{2T},x + \frac{k}{T}  + \frac{1}{2T})$ and bound
   \[ \int e^{-T^2(x-y)^2/2} \de\mu_\psi(y)
    \leq 2 \sum_{k\geq 0} e^{- k^2/2 }\rho\left(\frac{1}{T}\right) 
    \leq C \rho\left(\frac{1}{T}\right) 
    \]    
  Applying this to \eqref{CS1} obtains     
   \[ \langle A \rangle_{\psi,T} 
   \leq e\sqrt{\pi} C \|\phi\|^2 \|\psi\| \left[\rho(T^{-1}) \right]^{1/2} 
   \]     
     which completes the theorem.   
        
 \end{proof}

\bco\label{Co UrhoH}
  If $\psi$ is such that $\mu_{\psi}$ is U$\rho$H
 then for any compact operator $A$,
 \[  \langle A \rangle_{\psi,T} 
          \leq C_\psi^{1/p}  \|A\|_p \left[ \rho\left(\frac{1}{T} \right) \right]^{\frac{1}{2p}}  \]
 where $\|A\|_p = \left(\tr|A|^p \right)^{1/p}$ denotes the $p^{th}$-Schatten norm of $A$ for $p \geq 1$.
 \eco

 \section{Dynamical properties of $H$}
   
\subsection{Lower bounds on  diffusion of the wave packet} \label{Lbdwp}

For $L\geq 1 $  define the truncation of the lattice 
\[ \Lambda_L = \{ u \in \bbZ^d : |u| < L \},   \]
projections for the position of the tracer particle and the positions of less than $K \geq 1$ excitations of the field within $\Lambda_L$ are
  \[  R_{L,K} = \sum_{u\in \Lambda_L} \sum_{ \xi \in \cA_{\Lambda_L}^K} |u,\xi\rangle \langle u,\xi|.  \] 
Setting $L$ or $K$ to $\infty$ projects to finite excitation numbers or spatial lattices respectively. For any $K,L > 0$,
\begin{align}\label{QR trace}
  \tr(R_{L,K})
  = 
  \sum_{k=0}^{K-1} \frac{|\Lambda_L|^{k+1}}{k!} \leq CL^{dK} 
\end{align}
as the equality is the count for distributing 1 tracer particle and less than $K$ indistinguishable excitations.

 For $\psi \in \cH$, spectral measure $\mu_\psi$ may be decomposed according to \eqref{meas decomp} into $\rho$ singular and continuous parts using the sets $K_{<\infty}(\mu_\psi) = K_0(\mu_\psi) +K_+(\mu_\psi) $ and $ K_{\infty}(\mu_\psi)$. In particular, for all basis elements $|u,\xi\rangle $ we can construct the set  \[ K_\infty = \cup_{u \in \bbZ^d} \cup_{\xi \in \cA^\infty_{\bbZ^d}} K_\infty(\mu_{|u,\xi\rangle}) \]
 so that we may define spectral projections $ P_{\rho S} = P_{K_\infty} $ and $P_{\rho C} = (1 - P_{\rho S}) $ which in turn define closed orthogonal subspaces $\cH_{\rho S} = P_{\rho S} \cH $ and $ \cH_{\rho C} = P_{\rho C} \cH $ such that $\cH = \cH_{\rho S} \oplus \cH_{\rho C}$.
           
           \bthm \label{thm deloc}
           Fix $q > 0$ and $p > 2 q d $.  If $\psi$ is such that $P_{\rho C}\psi \neq 0$ 
             \[\rho( s ) = |\log( s )|^{-p} \] then, for large enough $T$,
  \begin{align*}  
    \left\langle(\id - R_{L,K})  \right\rangle_{\psi,T}
            &\geq
           \frac12 \|\psi_{\rho C} \|^2  
           \end{align*}
            for any $ K = K_T$ such that $K_T\geq 1  $ and $\frac{K_T}{\log\log T}  \to 0$, and $ L = L_T = (\log T)^{q/K_T}$.
             
           \ethm

\bpf We will decompose $\psi$ into a uniformly $\rho$ H\"older portion and a remainder.
 First write $\psi $ as a $\rho $ singular plus $\rho$ continuous sum
\[  \psi = \psi_{\rho C} + \psi_{\rho S}.\]
Further, given $\epsilon > 0$, we may use Corollary \ref{Co UrhoH decomp} to write the decomposition  
\[  \psi_{\rho C} = \psi_{1} +\psi_{2}  \]
such that  $\|\psi_2\| \leq   \epsilon $ and $\mu_{\psi_1}$ is U$\rho$H. Notice, by Corollary \ref{Co UrhoH}  and \eqref{QR trace}
\beq \label{ineq Q} \langle  R_{L,K} \rangle_{\psi_1,T}  
      < C_{\psi_1}   \tr( R_{L_T,K_T} )  \rho(T^{-1}) \leq C_{\psi_1}' \frac{  L_T^{dK_T} }{ (\log T)^{p/2} } 
      = C_{\psi_1} (\log T)^{qd - p/2}   \eeq 
      which approaches 0 for large $T$.
We apply this bound to the calculation of the moment of the original initial state $\psi$
\begin{align}
 \left\langle  R_{L,K}   \right\rangle_{\psi,T}
 & =   \label{ineq seq}
   \int_0^T\|  R_{L,K}  \psi(t)\|^2 \frac{dt}{T}  \\
   & \leq \nonumber 
    \int_0^T \left(\|  R_{L,K}  \psi_1(t)\| + \|\psi_2\| +\|\psi_{\rho S}\| \right)^2\frac{dt}{T}  \\
 & \leq \nonumber
  \left( \left(\left\langle R_{L,K}   \right\rangle_{\psi_1,T}\right)^{1/2} + \|\psi_{2} \|+\|\psi_{\rho S}\| \right)^2 
 \\&\leq \nonumber
   \left(  2 \epsilon +\|\psi_{\rho S}\| \right)^2 
\end{align}
where Minkowski's inequality is used on the third line and \eqref{ineq Q} is used in the final step. 
  Then it follows that
  \begin{align*} 
      \left\langle(1 - R_{L,K})  \right\rangle_{\psi,T}
            & = \|\psi\|^2 - \left\langle( R_{L,K})  \right\rangle_{\psi,T} 
  \geq \frac12 \|\psi_{\rho C}\|^2
  \end{align*} 
  for sufficiently large $T$.
\epf

  \subsection{Upper bounds on diffusion of the wave packet} 
  
  \label{Ubdwp}
  
  We will require a psuedo-metric to state the decay bound of the expectation of the fractional Green's function. 
  First we introduce a function on $ \bbZ^d \times \cA_{\bbZ^d}^{\infty 2} $ let
  \[  \cR_{\xi|\zeta}(u) = \max\{ \|v-u\| :\zeta(v) > 0 ;\zeta(v)\neq \xi(v) \}, \]
  we may now define 
  \[ \Upsilon(v,\zeta; u,\xi ) = \max \{\|u-v\| , \cR_{\xi|\zeta}(u) , \cR_{\zeta|\xi}(v) \}, \]
  that $\Upsilon$ is a pseudo-metric is demonstrated in \cite{MS18L}.
  First we state the fractional moment decay of the Green's function.
  
  \begin{theorem}[Corollary 1.5 from \cite{MS18L}]\label{greenfdecay}
     Suppose $\nu$ is supported on $[0,V_+]$ and that $\omega > V_+$. Fix $s < 1 $ and $\lambda > 0$. Then for any $N \geq 0$ there is $\gamma_N >0$ so that for any $ 0 < \gamma < \gamma_N $ there is finite $ C $ so that, for any $z \in \bbC\setminus \bbR$ so that $\Re (z) \in S_0\cup S_1 \cup \cdots \cup S_N $  we have
     \[ \bbE( |\langle u, \xi| ( H - z)^{-1}| v,\zeta \rangle  |^s ) < C e^{- \lambda \left( \Upsilon(u,\xi;v,\zeta) +  \left|\sqrt{N_\xi} - \sqrt{N_\zeta}\right| \right) } \]
     for any $|u,\xi\rangle,|v,\zeta\rangle \in\cE$  
  \end{theorem}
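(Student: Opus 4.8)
The plan is to prove the bound by the fractional moment method (FMM) of Aizenman--Molchanov, in the form developed in \cite{AW2015R}, adapted to the weak-hopping regime $0<\gamma<\gamma_N$ of the Holstein Hamiltonian. I would work throughout in the eigenbasis $\cE$, in which the field operator $H_f$ is diagonal with eigenvalues $V_u + \omega N_\xi$ and the only off-diagonal term is the tracer hopping $\gamma\Delta$. The two decay factors in the statement should arise from two distinct mechanisms: the metric factor $\Upsilon(u,\xi;v,\zeta)$ from the minimal number of tracer hops needed to connect the two basis vectors (the tracer must cross the spatial gap $\|u-v\|$ and visit every site at which $\xi$ and $\zeta$ disagree, which is exactly what $\cR_{\xi|\zeta}$ and $\cR_{\zeta|\xi}$ measure), and the factor $|\sqrt{N_\xi}-\sqrt{N_\zeta}|$ from the Franck--Condon suppression of the matrix elements of $\Delta$ between states of different total excitation number.

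The first step is an \emph{a priori} fractional-moment bound, $\sup_{z}\bbE(|\langle u,\xi|(H-z)^{-1}|v,\zeta\rangle|^s)<\infty$, uniform in $z\in\bbC\setminus\bbR$ with $\Re z$ in the admissible bands. Here the only randomness is $V_u$ at the tracer sites, entering $H$ through the operators $V_u\,(|u\rangle\langle u|\otimes \id)$. I would isolate the dependence on a single $V_u$ by the resolvent (Krein) identity and use the boundedness of the density $\nu$ to average: for the Herglotz-type dependence on $V_u$ one has $\int|\,\cdot\,|^s\,d\nu\le C_s$ with $C_s$ controlled by $\|\nu\|_\infty$ and $(1-s)^{-1}$. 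The subtlety, relative to the Anderson model, is that $|u\rangle\langle u|\otimes\id$ has infinite rank on the Fock fiber, so the rank-one spectral-averaging lemma must be replaced by its matrix-valued (fiber-compressed) version; the band separation coming from $\omega>V_+$ and small $\gamma$ is what keeps the compressed fiber resolvent bounded and makes this averaging uniform in $z$.

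The second step is to upgrade boundedness to exponential decay through a decoupling/walk expansion. Iterating the resolvent identity in the perturbation $\gamma\Delta$ expresses $\langle u,\xi|(H-z)^{-1}|v,\zeta\rangle$ as a sum over hopping paths of the tracer, each hop weighted by $\gamma$ times an overlap of Glauber displacement operators at neighbouring sites; these overlaps are the Franck--Condon factors and decay in the change of local occupation and in $N_\xi$. Combining the one-hop weight with the a priori bound (the Aizenman--Molchanov decoupling step) yields a recursion of the form $\bbE|\langle u,\xi|(H-z)^{-1}|v,\zeta\rangle|^s\le C\gamma^s\sum_{u'',\xi''} w\,\bbE|\langle u'',\xi''|(H-z)^{-1}|v,\zeta\rangle|^s$, which for $\gamma$ small enough is a contraction. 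Iterating produces geometric decay in the number of hops; bounding that number below by $\Upsilon(u,\xi;v,\zeta) + |\sqrt{N_\xi}-\sqrt{N_\zeta}|$ and optimizing over the contraction rate gives the claimed exponential bound, with $\gamma_N$ shrinking as $N$ (hence the number of accessible bands) grows.

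The main obstacle is the infinite dimensionality of the Fock fiber together with the non-conservation of excitation number: unlike the Anderson model, each hop of $\Delta$ couples to unboundedly many field configurations, so the walk expansion sums over an unbounded set of intermediate occupation numbers. The crux is to show that the Franck--Condon decay in the number direction is strong enough to beat the configurational entropy of the field uniformly in $z$, even as more and more bands $\bI_{n,\gamma}$ accumulate near $\Re z$. This is precisely where the hypotheses $\omega>V_+$ (band separation) and the smallness of $\gamma$ enter: they keep the resolvent denominators $|V_u+\omega N_\xi - z|$ bounded below along the walk and guarantee the simultaneous summability of the expansion in both the spatial and the excitation-number variables, which is what knits the two decay factors together into the stated estimate.
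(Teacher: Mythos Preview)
The paper does not prove this statement: it is quoted verbatim as Corollary~1.5 of \cite{MS18L} and used as a black box input to Theorem~\ref{Thm T Avg}. There is therefore no proof in this paper to compare your proposal against.

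That said, your outline is consistent with what the introduction of this paper says about \cite{MS18L}: a modification of the fractional moment method, with the a~priori bound coming from spectral averaging in the single random variable $V_u$ at the tracer site, and the exponential decay from a small-$\gamma$ iteration/decoupling in the $\cE$ basis. Your identification of the two decay mechanisms---the pseudo-metric $\Upsilon$ as a lower bound on the number of tracer hops needed to reconfigure the field from $\xi$ to $\zeta$, and the Franck--Condon suppression in $|\sqrt{N_\xi}-\sqrt{N_\zeta}|$ coming from the Glauber displacement overlaps---matches the structure of the bound and the discussion around \eqref{eigenbasis}. The obstacle you flag (infinite-rank fiber at each tracer site, so rank-one averaging must be replaced by a fiber-compressed version, and the walk expansion must be summable in the excitation direction) is exactly the nontrivial point; the band separation from $\omega>V_+$ and the smallness of $\gamma$ are indeed what make this go through. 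If you want to verify the details, you should consult \cite{MS18L} directly rather than this paper.
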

  From this inequality we shall show dynamical localization for small times on the spectral set $\bJ_N$ defined in  \eqref{N bands open} 
  \begin{theorem}\label{Thm T Avg}
     Fix $N\geq 0$, $\lambda > 0$, 
     and $ 0 < s < 1$. Then there is $\gamma_N >0$ so that for any $ 0 < \gamma < \gamma_N $ so that for every $|v,\zeta\rangle \in \cE $ there is, almost surely, a finite $ C $ so that
      \begin{align} \label{T Avg bound}
        \frac 1T \int_0^T |\langle v, \zeta| e^{itH} \chi_{\bJ_N} | u,\xi \rangle  |^2 \de t <  C_{s,N,\lambda;|u,\xi\rangle } T^{1-s} e^{- \lambda \left( \Upsilon(u,\xi;v,\zeta) +  \left|\sqrt{N_\xi} - \sqrt{N_\zeta}\right| \right) }  \end{align}
     for any $|v,\zeta\rangle \in\cE$.  
  \end{theorem}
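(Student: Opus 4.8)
The plan is to deduce the time-averaged bound \eqref{T Avg bound} from the fractional-moment decay of the Green's function in Theorem \ref{greenfdecay} via a standard Plancherel/contour argument. First, I would replace $e^{itH}\chi_{\bJ_N}$ by an integral against the resolvent. Writing $\psi = \chi_{\bJ_N}|u,\xi\rangle$, for $\eta > 0$ and $t > 0$ one has the representation
\[
 e^{-\eta t}\langle v,\zeta| e^{itH}\psi\rangle = \frac{1}{2\pi i}\int_{\bbR} e^{iEt}\,\langle v,\zeta|\big((H - E - i\eta)^{-1} - (H - E + i\eta)^{-1}\big)\psi\rangle\,\de E,
\]
so that multiplying by a fixed Gaussian cutoff in $t$ and using Plancherel in the $E$-variable, the quantity $\frac{1}{T}\int_0^T |\langle v,\zeta|e^{itH}\psi\rangle|^2\,\de t$ is controlled (after choosing $\eta \sim 1/T$) by
\[
 \frac{C}{T}\int_{\bbR} \big|\langle v,\zeta|(H - E - i/T)^{-1}\psi\rangle\big|^2\,\de E.
\]
Since $\psi = \chi_{\bJ_N}|u,\xi\rangle$ restricts the relevant energies to $\Re(z) \in S_0\cup\cdots\cup S_N$ (up to exponentially small spectral leakage controlled by the band separation \eqref{spec H}), and since $\chi_{\bJ_N}$ commutes with the resolvent, the matrix element is $\langle v,\zeta|(H-E-i/T)^{-1}|u,\xi\rangle$ plus a negligible term; the integral over $E$ outside $\bJ_N$ is exponentially small in $T$ by the band gaps.

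Next I would bound the $E$-integral using the fractional moment. The key trick (as in the Aizenman--Graf / Aizenman--Warzel approach to $T^{1-s}$ dynamical bounds) is to interpolate: for $|\cdot|^2 = |\cdot|^{2-s}\cdot|\cdot|^s$ with $0<s<1$, use the a priori $\ell^2$ bound $\int_\bbR |\langle v,\zeta|(H-E-i/T)^{-1}\psi\rangle|^2\,\de E \leq \pi T \|\psi\|^2$ (spectral theorem) together with the pointwise-in-$E$ bound $|\langle v,\zeta|(H-E-i/T)^{-1}|u,\xi\rangle| \leq 2T$, to reduce matters to controlling $\int_\bbR \bbE(|\langle v,\zeta|(H - E - i/T)^{-1}|u,\xi\rangle|^s)\,\de E$ over the bounded energy window $\bJ_N$. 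By Theorem \ref{greenfdecay} this is bounded by $C|\bJ_N|\,e^{-\lambda(\Upsilon(u,\xi;v,\zeta) + |\sqrt{N_\xi} - \sqrt{N_\zeta}|)}$, uniformly in the imaginary part $1/T$. Collecting the powers of $T$ — one factor $T$ from the a priori bound raised to the power $(2-s)/2$... more carefully, writing $\frac1T\int |G|^2 \le \frac1T (2T)^{2-s}\int |G|^s$, one gets exactly the $T^{1-s}$ prefactor, and taking expectations gives the bound in expectation.

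The last step is to upgrade from an in-expectation bound to an almost-sure bound with a (random, $|u,\xi\rangle$-dependent) finite constant. This is routine: apply the expectation bound along the sequence $T = 2^m$, use Markov's inequality and Borel--Cantelli to get an almost-sure bound $\frac{1}{2^m}\int_0^{2^m}|\cdots|^2\,\de t \le C_\omega\, 2^{m(1-s+\epsilon)}e^{-\lambda'(\cdots)}$ for all large $m$ (absorbing the $\epsilon$ and the loss by slightly shrinking $\lambda$), then interpolate between consecutive dyadic scales using monotonicity of $\int_0^T$ and the elementary fact that $\frac1T\int_0^T \le \frac{2}{2^m}\int_0^{2^{m}}$ for $T \in [2^{m-1}, 2^m]$; a union bound over the countably many $|v,\zeta\rangle \in \cE$ (or rather fixing $|v,\zeta\rangle$ as the theorem states) completes the argument.

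\textbf{Main obstacle.} The principal difficulty is handling the two infinite-dimensional subtleties that distinguish this model from the standard Anderson setting: first, the restriction $\chi_{\bJ_N}$ must genuinely localize the energy integration to a \emph{bounded} window so that $|\bJ_N|$ appears rather than a divergent $\int_\bbR \de E$, which requires quantitatively exploiting the band separation \eqref{spec H} and the fact that $\chi_{\bJ_N}|u,\xi\rangle$ has spectral support escaping the gaps only exponentially slowly in $\gamma$; and second, ensuring the a priori $L^2$-in-$E$ resolvent bound and the pointwise $\le 2T$ bound survive the presence of $\chi_{\bJ_N}$ (they do, since $\chi_{\bJ_N}$ is an orthogonal projection commuting with $H$, but one must be careful that the off-window contribution is not merely bounded but genuinely small). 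Everything else — Plancherel, the $s$-th-power interpolation trick, and the Borel--Cantelli upgrade — is standard machinery.
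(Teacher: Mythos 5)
Your overall architecture (resolvent representation of the evolution with imaginary part $\sim 1/T$, the interpolation $|G|^2\le T^{2-s}|G|^s$, the fractional moment bound of Theorem \ref{greenfdecay} on a bounded energy window, then Markov/Borel--Cantelli along dyadic times) is in the right spirit, and differs from the paper mainly in the representation used: the paper writes $e^{-itH}\chi_{I_{n,\gamma,\epsilon}}$ as a contour integral $\frac{1}{2\pi i}\int_{\Gamma_{n,\gamma,\delta,\epsilon}}e^{-itw}(w-H)^{-1}\de w$ around each band, bounds $|e^{-itw}|\le e^{T\delta}$ and $|G(w)|\le |y|^{-(1-s)}|G(w)|^{s}$ on the contour, sets $\delta=T^{-1}$, and only then averages, Borel--Cantellis, and lets $\epsilon\to0$; you instead use the Kato/Plancherel identity along the horizontal line $\im z=1/T$.

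However, there is a genuine gap in how you handle the projection $\chi_{\bJ_N}$, and it is precisely the point where the two representations differ. In your route the object that appears is $\langle v,\zeta|(H-E-i/T)^{-1}\chi_{\bJ_N}|u,\xi\rangle$, not the bare Green's function to which Theorem \ref{greenfdecay} applies. Your two fixes --- dropping the energies outside $\bJ_N$, and replacing $(H-E-i/T)^{-1}\chi_{\bJ_N}$ by $(H-E-i/T)^{-1}$ inside $\bJ_N$ --- each produce an error term of the form $\langle v,\zeta|(H-E-i/T)^{-1}\chi_{\bJ_N^c}|u,\xi\rangle$ (or the near-edge part of the $E$-integral), and the band separation \eqref{spec H} only bounds these in operator norm, by $(\omega-c_\gamma)^{-1}$ or, after the $E$-integration near the band edges, by an $O(1)$ constant; it does not make them ``exponentially small,'' and crucially it gives \emph{no decay in} $\Upsilon(u,\xi;v,\zeta)+|\sqrt{N_\xi}-\sqrt{N_\zeta}|$. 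Since the whole content of \eqref{T Avg bound} (and its use in Theorem \ref{thm loc}, where one sums over $\sim L^{Kd}$ states $|v,\zeta\rangle$, many with $v$ close to $u$ but excitations far away) is exactly that decay factor, these unlocalized $O(1)$ remainders cannot be absorbed. The only available substitute, Theorem \ref{thm MS1}, decays merely in $\|u-v\|$, not in $\Upsilon$, so it does not close the gap either. This is why the paper uses the contour representation: it packages the band projection and the time evolution into bare resolvents evaluated only at points $w$ with $\re(w)$ inside the bands, where Theorem \ref{greenfdecay} applies verbatim, so no cross-band error term ever arises. To repair your argument you would either need to adopt that contour device, or prove a version of Theorem \ref{greenfdecay} with $\Upsilon$-decay for $(H-z)^{-1}\chi_{\bJ_N^c}$ (equivalently, an eigenfunction-correlator bound with $\Upsilon$-decay), which is not available in the paper.
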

  
  \begin{proof} 
  Let  $ \gamma_N$ be as defined in Theorem \ref{greenfdecay}. Let $0 < \gamma < \gamma_N$, $c_{\gamma} =  V_+ + 4d\gamma$, and let $0 < \epsilon <  \frac12(\omega - c_{\gamma}) $.  
  For $0 < \epsilon,\delta< 1$ and $0 \leq k \leq N$,  let $ I_{k,\gamma,\epsilon} = (\omega n+\epsilon,\omega n + c_\gamma - \epsilon) $, and define  the contour in the complex plane
  \[\Gamma_{n,\gamma,\delta,\epsilon} = \partial\{z =x+iy \in\bbC:  x\in  I_{n,\gamma,\epsilon}  ;| y |  < \delta\} \]
   oriented in the counterclockwise direction.
  We now consider the averages
  \begin{align}
     \bbE\left[ \frac{1}{T} \int_0^T \left|\langle u,\xi |  e^{-it H } \chi_{I_{n,\gamma,\epsilon}} |v, \zeta \rangle\right| \de t\right] 
    & = 
     \bbE\left[ \frac{1}{T} \int_0^T \left|\langle u,\xi | \left( \frac{1}{2\pi i}\int_{\Gamma_{n,\gamma,\delta,\epsilon}} \frac{ e^{-itw} }{ w-H } \de w \right) |v, \zeta \rangle\right| \de t\right] ,\end{align}
     for $\chi_{I_{n,\gamma,\epsilon}}  $ the indicator function for $ I_{n,\gamma,\epsilon}  $. Although the contour of integration passes through the spectrum, given Theorem \ref{greenfdecay} it is not hard to see the integrals converge.
  For any $\omega = x+iy \in S_k $, with $|y|\leq \delta$, and  for any  $0 < s < 1 $ 
  \begin{align}
  \bbE \left[ \frac{1}{T} \int_0^T \left|\langle u,\xi |  \frac{ e^{-itw} }{ w-H }   |v, \zeta \rangle\right| \de t\right] 
  & =  
   \frac{e^{ T \delta }}{T |y|^{1-s} } \int_0^T \bbE \left[ \left|\langle u,\xi |  \frac{ 1 }{ w-H }   |v, \zeta \rangle\right|^s
    \right] \de t 
    \\ \nonumber
  & \leq  
   \frac{e^{ T \delta }}{ |y|^{1-s} } C e^{- \lambda \left( \Upsilon(u,\xi;v,\zeta) +  \left|\sqrt{N_\xi} - \sqrt{N_\zeta}\right| \right) } 
  \end{align}
 where the inequality follows from Theorem \ref{greenfdecay}. For any $0 < s < 1$ there is a $C_1$, so that,  integrating about the contour,
 \begin{align}
  \frac{1}{2\pi} \int_{\Gamma_{n,\gamma,\delta,\epsilon}} \bbE \left[ \frac{1}{T} \int_0^T \left|\langle u,\xi |  \frac{ e^{-itw} }{ w-H }   |v, \zeta \rangle\right| \de t\right]  \de w   
  & \leq  
   \frac{e^{ T \delta }}{ \delta^{1-s} } C_1 e^{- \lambda \left( \Upsilon(u,\xi;v,\zeta) +  \left|\sqrt{N_\xi} - \sqrt{N_\zeta}\right| \right) }. 
 \end{align}
 Setting $\delta = T^{-1}$, and applying Fubini's theorem  
  \begin{align}
     \bbE\left[ \frac{1}{T} \int_0^T \left|\langle u,\xi | \left( \frac{1}{2\pi i}\int_{\Gamma_{n,\gamma,\delta,\epsilon}} \frac{ e^{-itw} }{ w-H } \de w \right) |v, \zeta \rangle\right| \de t\right] 
     \leq 
    C_2  T^{1-s} e^{- \lambda \left( \Upsilon(u,\xi;v,\zeta) +  \left|\sqrt{N_\xi} - \sqrt{N_\zeta}\right| \right) }. 
    \end{align}
   A simple Borel-Cantelli lemma then implies, for any $|v,\zeta\rangle$ and $0< \lambda'<\lambda$, there is $C_3$ so that 
  \begin{align}
      \frac{1}{T} \int_0^T \left|\langle u,\xi | e^{-it H } \chi_{I_{n,\gamma,\epsilon}}  |v, \zeta \rangle\right| \de t  
     \leq 
    C_3  T^{1-s} e^{- \lambda' \left( \Upsilon(u,\xi;v,\zeta) +  \left|\sqrt{N_\xi} - \sqrt{N_\zeta}\right| \right) } 
    \end{align}
    almost surely.  
    As the integrand is bounded by 1 we have 
      \begin{align}
      \frac{1}{T} \int_0^T \left|\langle u,\xi | e^{-it H } \chi_{I_{n,\gamma,\epsilon}}  |v, \zeta \rangle\right|^2 \de t  
     \leq 
    C_3  T^{1-s} e^{- \lambda' \left( \Upsilon(u,\xi;v,\zeta) +  \left|\sqrt{N_\xi} - \sqrt{N_\zeta}\right| \right) } ,
    \end{align}
     taking $\epsilon \to 0$ completes the proof.
   \end{proof}

   Now we move onto the upper bound for diffusion of the wavepacket. Let $K_T$ and $ L_T$ be  functions approaching $\infty$ as $T\to \infty$ such that   $ K_T < L_T^{1-\epsilon}$.
  
   \begin{theorem}
   \label{thm loc}
 Given $N$ there is $\gamma_N> 0$  sufficiently small so that, 
 for any $0 < \gamma < \gamma_N $ the following holds almost surely.
    For any $ |u,\xi\rangle $ and all sufficiently large $T$,
   \begin{align}
    \langle ( \id - R_{L_T,K_T})\rangle_{\psi,T} & \leq \frac{ (N+1) \omega }{K_T}  
    +C_{u,\xi}  e^{- \lambda   L_T }
   \end{align} 
   where $ \psi = \chi_{\bJ_N}|u,\xi\rangle $. 
   \end{theorem}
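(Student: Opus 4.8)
The plan is to split $\id-R_{L,K}$ into a piece that has left the low--excitation energy window and a piece that is spatially far out but still carries few excitations, bounding the first by energy conservation and the second by the fractional--moment dynamics of Theorem~\ref{Thm T Avg}. Let $P_{\geq K}=\chi_{[\omega K,\infty)}(H_f)$, equivalently the projection onto $\overline{\mathrm{span}}\{\,|v,\zeta\rangle : N_\zeta\geq K\,\}$ (these coincide because $H_f$ is diagonal in $\cE$ with eigenvalue $V_v+\omega N_\zeta$ and $V_+<\omega$), and $P_{<K}=\id-P_{\geq K}$. Since $R_{L,K}=R_{L,\infty}P_{<K}$, one has the exact identity $\id-R_{L,K}=P_{\geq K}+(\id-R_{L,\infty})P_{<K}$ as a sum of two mutually orthogonal projections, both diagonal in $\cE$, so $\langle \id-R_{L,K}\rangle_{\psi,T}=\langle P_{\geq K}\rangle_{\psi,T}+\langle(\id-R_{L,\infty})P_{<K}\rangle_{\psi,T}$.

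For the first term, $H_f=\sum_{v,\zeta}(V_v+\omega N_\zeta)\,|v,\zeta\rangle\langle v,\zeta|\geq \omega K\,P_{\geq K}$, so $P_{\geq K}\leq H_f/(\omega K)$. Writing $H_f=H-\gamma\Delta$ with $\Delta=J\otimes\id\geq 0$, using that $\langle\psi(t)|H|\psi(t)\rangle=\langle\psi|H|\psi\rangle$ is conserved, and that $\supp\mu_\psi\subset\bJ_N\subset(0,(N+1)\omega)$ for $\gamma$ small (whence $\langle\psi|H|\psi\rangle\leq (N+1)\omega\|\psi\|^2\leq(N+1)\omega$), we obtain $\langle P_{\geq K}\rangle_{\psi,T}\leq\frac1{\omega K}\langle H_f\rangle_{\psi,T}\leq\frac{(N+1)\omega}{K}$, valid for every $T$.

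For the second term, expand in $\cE$:
\[
\langle(\id-R_{L,\infty})P_{<K}\rangle_{\psi,T}=\sum_{\substack{N_\zeta<K\\ v\notin\Lambda_L\ \text{or}\ \supp\zeta\not\subset\Lambda_L}}\frac1T\int_0^T\bigl|\langle v,\zeta|e^{-itH}\chi_{\bJ_N}|u,\xi\rangle\bigr|^2\,dt .
\]
Once $L$ exceeds $\|u\|$ and the radius of $\supp\xi$, every index here satisfies $\Upsilon(u,\xi;v,\zeta)\geq L-\|u\|$: either $\|u-v\|\geq L-\|u\|$, or $\zeta$ is positive at some $w$ with $\|w\|\geq L$, where necessarily $\xi(w)=0\neq\zeta(w)$ so $\cR_{\xi|\zeta}(u)\geq L-\|u\|$. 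The almost--sure bound of Theorem~\ref{Thm T Avg} then dominates the sum by $C\,T^{1-s}\sum_{N_\zeta<K,\ \Upsilon\geq L-\|u\|}e^{-\lambda\Upsilon}$. I would estimate this last sum by grouping indices according to $r=\lceil\Upsilon\rceil\geq L-\|u\|$: there are $O(r^d)$ positions $v$ within distance $r$ of $u$, and --- this is where the cutoff $N_\zeta<K_T$ is indispensable --- the configurations $\zeta$ with fewer than $K_T$ excitations, all lying within distance $O(r)$ of $u$, number at most $(|\Lambda_{Cr}|+K_T)^{K_T}\leq C^{K_T}r^{dK_T}$, so the count of indices with $\Upsilon\approx r$ is a polynomial in $r$ of degree $d(K_T+1)$. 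Because $K_T<L_T^{1-\epsilon}$ this degree is $o(L_T)$, so for $T$ large the series $\sum_{r\geq L_T-\|u\|}C^{K_T}r^{d(K_T+1)}e^{-\lambda r}$ is controlled by its first term, and one arrives at $\langle(\id-R_{L,\infty})P_{<K}\rangle_{\psi,T}\leq C_{u,\xi}\,T^{1-s}\,C^{K_T}L_T^{d(K_T+1)}\,e^{-\lambda L_T}$, which collapses to $C_{u,\xi}e^{-\lambda' L_T}$ once $L_T$ is large enough relative to $\log T$. Adding the two estimates and relabeling $\lambda$ gives the theorem.

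The main obstacle is precisely this last quantitative balance. The only dynamical input carrying the spatial ($\Upsilon$--pseudometric) decay we need, Theorem~\ref{Thm T Avg}, comes with a $T^{1-s}$ prefactor (it is produced by integrating the resolvent along a contour at distance $T^{-1}$ from the spectrum), while the combinatorics of excitation configurations contributes the factor $L_T^{d(K_T+1)}$ whose exponent grows with the field cutoff; both must be swallowed by $e^{-\lambda L_T}$. Keeping them under control is exactly what forces the hypothesis $K_T<L_T^{1-\epsilon}$ (so that the polynomial degree stays sublinear in $L_T$) and constrains how slowly $L_T$ may grow. A further subtlety is that the object which has escaped $\Lambda_L$ may be a field excitation rather than the tracer particle; this case is invisible to the pure tracer--position localization of Theorem~\ref{thm MS1}, which is the reason one must work with $\Upsilon$ --- it charges a far excitation to the tracer having once visited that site --- and hence why the $T^{1-s}$ loss appears to be unavoidable at this level of generality.
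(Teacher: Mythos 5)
Your decomposition is exactly the paper's, just written in different notation: $P_{\geq K}=\id-R_{\infty,K}$ and $(\id-R_{L,\infty})P_{<K}=R_{\infty,K}-R_{L,K}$, so the two pieces you estimate are precisely \eqref{I-RooK} and \eqref{RooK-RLTK}. Your energy argument (diagonality of $H_f$ in $\cE$, $P_{\geq K}\leq H_f/(\omega K)$, conservation of $\langle\psi|H|\psi\rangle$ together with $\supp\mu_\psi\subset\bJ_N$) is a correct fleshing-out of the paper's terse ``energy argument'' for \eqref{I-RooK}, your lower bound $\Upsilon(u,\xi;v,\zeta)\geq L-\|u\|$ for the far indices is right, and your configuration count of order $C^{K}L^{d(K+1)}$ is the same combinatorics as \eqref{QR trace} and \eqref{F sum}.

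The genuine divergence, and the gap relative to the statement you were asked to prove, is the treatment of the $T^{1-s}$ prefactor. Your final estimate for the second piece is $C_{u,\xi}\,T^{1-s}C^{K_T}L_T^{d(K_T+1)}e^{-\lambda L_T}$, which you only collapse to $C_{u,\xi}e^{-\lambda' L_T}$ ``once $L_T$ is large enough relative to $\log T$.'' That condition is not among the hypotheses of Theorem \ref{thm loc} (only $K_T,L_T\to\infty$ with $K_T<L_T^{1-\epsilon}$ is assumed), and it fails exactly where the theorem is applied: in the proof of Theorem \ref{main thm} one takes $K_T=(\log\log T)^{1/2}$ and $L_T=(\log T)^{q/K_T}=e^{qK_T}\ll\log T$, for which $T^{1-s}e^{-\lambda L_T}\to\infty$. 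So as written your argument proves a weaker statement than the one asserted, and one that could not feed into the proof of the main theorem. For comparison, the paper's own proof passes from Theorem \ref{Thm T Avg} (which does carry the $T^{1-s}$ factor, just as you say) to \eqref{RooK-RLTK} with no $T$-dependence and no intermediate justification, and imposes no lower bound on the growth of $L_T$; you have in effect isolated the one step the paper does not explain. To close your proof as a proof of the stated theorem you would need either an additional time-uniform input with $\Upsilon$-decay (Theorem \ref{thm MS1} decays only in the tracer coordinate and is not summable over field configurations, as you note) or a different route around the contour estimate that avoids paying $\delta^{-(1-s)}$ at $\delta=T^{-1}$; absent that, the restriction $L_T\gtrsim\log T$ must be added to the statement, which then breaks the application in Theorem \ref{main thm}.
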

   
  \begin{proof} 
    
    Let us again assume  $L_0 \in \bbN$ is large enough that  $|u,\xi \rangle \in Q_{L_0}R_{L_0,N_\xi+1}\cE$.
    We will apply an energy argument to bound large excitations, 
    \begin{align} \label{I-RooK}
        \langle(\id-R_{\infty,K})\rangle_{\psi,T}   & \leq \frac{ (N+1) \omega }{ K }
    \end{align}
  Considering the bound in \eqref{T Avg bound} let us introduce the notation  
     \[ F_\lambda(v,\zeta)  =   e^{- \lambda  \Upsilon(u,\xi;v,\zeta)  }, \]
  and the subsets  of $\cE$,   
     \[   \cE_{L,K} := (R_{L+1,K} - R_{L,K} ) \cE.   \]
     For $ K < L_T^{1-\epsilon}   $ and sufficiently large $L_T $
     \begin{align} \label{F sum} 
    \sum_{L \geq L_T}  \sum_{ |v,\zeta\rangle \in \cE_{L,K}  } F_\lambda(v,\zeta) 
      &\leq
     C   e^{\gamma L_0 } 
        \sum_{L \geq L_T} 
     L^{Kd}  e^{- \lambda L  }  
     \leq    
    C_{\epsilon'}  
     e^{- (\lambda -\epsilon') (L_T - L_0) }.
      \end{align}
  Combining Theorem \ref{Thm T Avg}  and \eqref{F sum} we have   
    \begin{align} 
    \label{RooK-RLTK}
      \langle (R_{\infty, K} - R_{ L_T,K}) \rangle_{\psi,T}
      = \sum_{L > L_T}  \langle( R_{L+1,K}  - R_{L,K}  )\rangle_{\psi,T} & \leq 
    C_{\epsilon'}  
     e^{- (\lambda -\epsilon') (L_T - L_0) }.
    \end{align}
    Combining \eqref{I-RooK} and
     \eqref{RooK-RLTK} completes the proof.

\end{proof}

 The proof of Theorem \ref{main thm} is a combination of theorems \ref{thm deloc} and \ref{thm loc}. Suppose $|u,\xi\rangle \in \cE $ is such that $P_{\rho C} \psi \neq 0$. 
    Let $K_T =(\log\log T)^{1/2} $ and $ L_T = (\log T)^{ q / K_T} = e^{q K_T}$  for any $0 < q < p/(2d)$.
    Then Theorem \ref{thm deloc} implies 
 \[    \left\langle(\id - R_{L_T,K_T})  \right\rangle_{\psi,T}
            \geq
           \frac12 \|\psi_{\rho C} \|^2  \]
  On the other hand, Theorem \ref{thm loc} implies $ \left\langle(\id - R_{L_T,K_T})  \right\rangle_{\psi,T} \to 0 $ as $T \to \infty$.
 Thus $\|\psi_{\rho C} \| = 0 $, which implies $\psi $ is purely $\rho $-Hausdorff singular.
  
\bibliographystyle{plain}	
\bibliography{zdimrefs}

\end{document}